\journal{Theoretical Computer Science}
\newtheorem{theorem}{Theorem}
\newtheorem{corollary}[theorem]{Corollary}
\newtheorem{proposition}[theorem]{Proposition}
\newenvironment{proof}[1][Proof.\ ]
    {\noindent\textbf{#1}}
    {\hspace{0.4cm}\rule{.2cm}{.2cm}\vspace{\baselineskip}}
\newcolumntype{C}[1]{>{\centering\let\newline\\\arraybackslash\hspace{0pt}}m{#1}}
\def\P{\mathscr{P}}
\def\C{\mathcal{C}}
\def\Py{\mathrm{P}}
\def\NPC{\mathrm{NPc}}
\def\CONP{\mathrm{CoNP}}
\def\NP{\mathrm{NP}}
\begin{document}

\begin{frontmatter}

\title{A general framework for path convexities}

\cortext[cor]{Corresponding author.}
\author[uff,cefet]{Jo\~ao Vinicius C. Thompson}\ead{joao.thompson@cefet-rj.br}
\author[uff]{Loana T. Nogueira}\ead{loana@ic.uff.br}
\author[uff]{F\'abio Protti\corref{cor}}\ead{fabio@ic.uff.br}
\author[uff]{Raquel S. F. Bravo}\ead{raquel@ic.uff.br}
\author[ufrj]{Mitre C. Dourado}\ead{mitre@nce.ufrj.br}
\author[uff]{U\'everton S. Souza}\ead{usouza@ic.uf.br}

\address[uff]{Instituto de Computa\c c\~ao \\
Universidade Federal Fluminense \\
Niter\'oi - RJ, Brazil}

\address[cefet]{
Centro Federal de Educa\c c\~ao Tecnol\'ogica Celso Suckow da Fonseca\\
Campus Petr\'opolis\\
Petr\'opolis - RJ, Brazil}

\address[ufrj]{Depto de Ci\^encia da Computa\c c\~ao - Instituto de Matem\'atica \\
Universidade Federal do Rio de Janeiro \\
Rio de Janeiro - RJ, Brazil}

\begin{abstract}
In this work we deal with the so-called {\em path convexities}, defined over special collections of paths. For example, the collection of the shortest paths in a graph is associated with the well-known {\em geodesic convexity}, while the collection of the induced paths is associated with the {\em monophonic convexity}\,; and there are many other examples. Besides reviewing the path convexities in the literature, we propose a general path convexity framework, of which most existing path convexities can be viewed as particular cases. Some benefits of the proposed framework are the systematization of the algorithmic study of related problems and the possibility of defining new convexities not yet investigated.
\end{abstract}

\begin{keyword}
Algorithmic Complexity  \sep Graph Convexity \sep Path Convexity
\end{keyword}

\end{frontmatter}


\section{Introduction}

A {\em finite convexity space} is a pair $(V,\C)$ consisting of a finite set $V$ and a family $\C$ of subsets of $V$ such that $\emptyset\in\C$, $V\in\C$, and $\C$ is closed under intersection. Members of $\C$ are called {\em convex sets}.

Let $\P$ be a collection of paths of a graph $G$, and let $I_{\P}:2^{V(G)}\rightarrow 2^{V(G)}$ be a function (called {\em interval function}) such that
$$I_{\P}(S) = S\,\cup\,\{z\not\in S\mid\exists \ u,v\in S \ \mbox{such that} \ z \ \mbox{lies in an} \ uv\mbox{-path} \ P\in\P\}.$$

Distinct choices of $\P$ lead to interval functions of quite different behavior. Such functions, in turn, are naturally associated with special convexity spaces (the so-called {\em path convexities}). For instance, if $\P$ contains precisely all the shortest paths in a graph then the corresponding interval function is naturally associated with the well-known {\em geodesic convexity}\,; if $\P$ is the collection of induced paths then the corresponding interval function is associated with the {\em monophonic convexity}\,; and there are many other examples in the literature.

In this work we propose a general path convexity framework, of which most path convexities in the literature can be viewed as particular cases. Some benefits of the proposed framework are the systematization of the algorithmic study of related problems and the possibility of defining new path convexities not yet investigated.

Our contributions are concentrated mainly in Section 3, where we describe in detail our framework. The idea is to control the length of the paths in $\P$, as well as the types of chords allowed to exist in such paths. Such control can be done by means of four matrices that specify, for each pair $(u,v)$ of vertices, the minimum/maximum length and minimum/maximum chord length in all $uv$-paths of $\P$. We prove hardness results for the more general approach, where the matrices are part of the input of the related computational problems. We also describe some polynomial cases by restricting the usage of such matrices, including linear-time methods for bounded treewidth graphs. In addition, we show how to define most existing path convexities in the literature within the proposed framework. In Section 4 we provide examples of new interesting convexities and discuss future algorithmic developments.

\section{Preliminaries}

In this section we first provide all the necessary background. Next, we briefly review the main path convexities in the literature and list six fundamental computational problems in graph convexity that will be considered in this work. Finally, we prove two useful propositions.

All graphs are finite, simple, nonempty, and connected. Let $G$ denote a graph with $n$ vertices and $m$ edges. The {\em length} of a path $P$ in $G$, denoted by $|P|$, is its number of edges. A path $P$ in $G$ with endpoints $u$ and $v$ is an $uv${\em -path}. An $uv$-path $P$ in $G$ is {\em shortest} if there is no $uv$-path $P'$ in $G$ such that $|P'|<|P|$. If an $uv$-path $P$ is shortest then $|P|$ is the {\em distance} between $u$ and $v$ in $G$, and we write $|P|={\mathit dist}_G(u,v)$. A {\em chord of length} $l\geq 2$ in a path $P=(v_0,v_1,\ldots,v_{|P|})$ is an edge $v_iv_j\in E(G)$ such that $i,j\in\{0,\ldots,|P|\}$ and $|i-j|=l\geq 2$.


Let $\P$ be a collection of paths of a graph $G$, and let $I_{\P}:2^{V(G)}\rightarrow 2^{V(G)}$ be the interval function associated with $\P$, i.e.,
\begin{equation}\label{def-of-interval}
I_{\P}(S) = S\,\cup\,\{z\not\in S\mid\exists \ u,v\in S \ \mbox{such that} \ z \ \mbox{lies in an} \ uv\mbox{-path} \ P\in\P\}.
\end{equation}

Define $\C_{\P}$ as the family of subsets of $V(G)$ such that $S\in\C_{\P}$ if and only if $I_{\P}(S)=S$. Then it is easy to see that $(V(G),\C_{\P})$ is a finite convexity space, whose convex sets are precisely the fixed points of $I_{\P}$.

\begin{proposition}\label{prop:convexity}{\em \cite{V93}}
$(V(G),\C_{\P})$ is a finite convexity space.
\end{proposition}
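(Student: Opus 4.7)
The plan is to verify the three defining properties of a finite convexity space directly from the definitions of $I_{\P}$ and $\C_{\P}$, namely that $\emptyset,V(G)\in\C_{\P}$ and that $\C_{\P}$ is closed under intersection. The first two properties are essentially vacuous: for $S=\emptyset$, the quantifier ``$\exists\,u,v\in S$'' in \eqref{def-of-interval} fails, so the set added to $S$ is empty, giving $I_{\P}(\emptyset)=\emptyset$; for $S=V(G)$, the condition ``$z\notin S$'' is never met, so nothing is added and $I_{\P}(V(G))=V(G)$. Hence both sets are fixed points of $I_{\P}$.

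The substantive step is closure under intersection. I would first record the monotonicity of $I_{\P}$: if $A\subseteq B\subseteq V(G)$, then $I_{\P}(A)\subseteq I_{\P}(B)$. This is immediate, because any $u,v\in A$ witnessing the membership of some $z$ in $I_{\P}(A)\setminus A$ also belong to $B$, so the same path $P\in\P$ witnesses $z\in I_{\P}(B)$. Given this, let $\{S_i\}_{i\in J}$ be an arbitrary (finite) subfamily of $\C_{\P}$, and set $T=\bigcap_{i\in J}S_i$. The inclusion $T\subseteq I_{\P}(T)$ is built into the definition of $I_{\P}$. For the reverse inclusion, pick $z\in I_{\P}(T)$; if $z\in T$ we are done, otherwise there are $u,v\in T$ and an $uv$-path $P\in\P$ containing $z$. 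Since $u,v\in S_i$ for every $i\in J$, the same witnesses yield $z\in I_{\P}(S_i)=S_i$ (using that each $S_i$ is a fixed point), so $z\in\bigcap_i S_i=T$. Thus $I_{\P}(T)=T$, i.e., $T\in\C_{\P}$.

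I do not anticipate any real obstacle: the entire argument is bookkeeping driven by the existential clause in \eqref{def-of-interval}, and the fact that a single pair of witnesses $(u,v)$ and a single path $P\in\P$ that place $z$ in $I_{\P}(T)$ automatically place $z$ in $I_{\P}(S_i)$ for every superset $S_i\supseteq T$. The only mild care required is in the base cases, to observe that the vacuous quantifier makes $I_{\P}(\emptyset)=\emptyset$ rather than something larger.
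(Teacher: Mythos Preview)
Your proof is correct and follows essentially the same approach as the paper's: both show closure under intersection by taking witnesses $u,v\in T$ and a path $P\in\P$ placing $z$ in $I_{\P}(T)$, and observing that the same witnesses put $z$ in each $I_{\P}(S_i)=S_i$. You are merely more explicit than the paper about the base cases $\emptyset,V(G)\in\C_{\P}$ and handle arbitrary intersections directly rather than pairwise, which are presentational rather than substantive differences.
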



In order to ease the notation, we omit the subscript ${\P}$ whenever it is clear from the context.

\subsection{Path convexities in the literature}

By varying the choice of the collection $\P$, interval functions of different behavior can be defined using Equation (\ref{def-of-interval}). The convexity spaces associated with such functions are called {\em path convexities}.

In Table~\ref{convexities} we list the main path convexities that appear in the literature. In the table, each convexity is defined by the collection of paths $\P$ considered.

\begin{table}[htb]
\centering
\small
\begin{tabular}{ |l|c|c|c|c|c| }
\toprule
convexity name & collection of paths $\P$ considered\\ \hline
geodesic \cite{HN81,p113,p148}          & shortest paths \\
monophonic \cite{p41,D88,p100}          & induced paths \\
$g^3$ \cite{p150}                       & shortest paths of length at least three \\
$m^3$ \cite{p28,p88}                    & induced paths of length at least three \\
$g_k$ \cite{p101}                       & shortest paths of length at most $k$ \\
$P_3$ \cite{p37,p85,p153}               & paths of length two \\
$P_3^*$ \cite{A13}                      & induced paths of length two \\
triangle-path \cite{p40,p43,p45}        & paths allowing only chords of length two \\
total  \cite{p76}                       & paths allowing only chords of length at least three \\
detour  \cite{p46,p68,p70}              & longest paths \\
all-path \cite{p42,p107,p165}           & all paths \\
\bottomrule
\end{tabular}
\caption{Some path convexities studied in the literature.}\label{convexities}
\end{table}

\subsection{Computational problems}\label{problems}

In this work we focus on six computational problems that are usually studied in the field of convexity in graphs. The list, of course, is not complete and other important problems could also be considered.

We need some additional definitions. Let $S\subseteq V(G)$. If $I(S) = V(G)$ then $S$ is an {\em interval set}. The {\em convex hull} $H(S)$ of $S$ is the smallest convex set containing $S$. Write $I^0(S)=S$ and define $I^{i+1}(S)=I(I^i(S))$ for $i\geq 0$. Note that $I(S)=I^1(S)$ and there exists an index $i$ for which $H(S)=I^i(S)$. If $H(S) = V(G)$ then $S$ is a {\em hull set}. The {\em convexity number} $c(G)$ of $G$ is the size of a maximum convex set $S\neq V(G)$. The {\em interval number} $i(G)$ of $G$ is the size of a smallest interval set of $G$. The {\em hull number} $h(G)$ of $G$ is the size of a smallest hull set of $G$. Now we are in position to state the six problems dealt with in this work:

\noindent {\sc Convex Set - CS}\\
Input: A graph $G$ and a set $S\subseteq V(G)$.\\
Question: Is $S$ convex?

\noindent {\sc Interval Determination - ID}\\
Input: A graph $G$, a set $S\subseteq V(G)$, and a vertex $z\in V(G)$.\\
Question: Does $z$ belong to $I(S)$?

\noindent {\sc Convex Hull Determination - CHD}\\
Input: A graph $G$, a set $S\subseteq V(G)$, and a vertex $z\in V(G)$.\\
Question: Does $z$ belong to $H(S)$?



\noindent {\sc Convexity Number - CN}\\
Input: A graph $G$ and a positive integer $r$.\\
Question: Is $c(G)\geq r$?

\noindent {\sc Interval Number - IN}\\
Input: A graph $G$ and a positive integer $r$.\\
Question: Is $i(G)\leq r$?

\noindent {\sc Hull Number - HN}\\
Input: A graph $G$ and a positive integer $r$.\\
Question: Is $h(G)\leq r$?

\subsection{Existing complexity results}

The table below shows the complexity of the six problems listed in the preceding subsection for some convexity spaces. All the entries of the table correspond to results found in the literature, or to trivial results (indicated by `[t]').

\def \mc #1{\multicolumn{2}{|c|}{#1}}

\begin{table}[H]
\centering
\small
\begin{tabular}{|l|cc|cc|cc|cc|cc|}
\toprule
         &\mc{geodesic}       &\mc{monophonic}     &\mc{$P_3$}          &\mc{$P_3^*$}        &\mc{triangle-path}\\\hline
{\sc cs} & $\Py$  &[t]        & $\Py$  &\cite{D10} & $\Py$  &[t]        & $\Py$  &\cite{A13} & $\Py$  &\cite{D15}\\
{\sc id} & $\Py$  &[t]        & $\NPC$ &\cite{D10} & $\Py$  &[t]        & $\Py$  &\cite{A13} & $\NPC$ &\cite{D15}\\
{\sc chd}& $\Py$  &[t]        & $\Py$  &\cite{D10} & $\Py$  &[t]        & $\Py$  &\cite{A13} & $\Py$  &\cite{D15}\\
{\sc cn} & $\NPC$ &\cite{G03} & $\NPC$ &\cite{D10} & $\NPC$ &\cite{C09} & $\NPC$ &\cite{A13} & $\Py$  &\cite{D15}\\
{\sc in} & $\NPC$ &\cite{A02} & $\NPC$ &\cite{D10} & $\NPC$ &\cite{C11} & $\NPC$ &\cite{A13} & $\NPC$ &\cite{D15}\\
{\sc hn} & $\NPC$ &\cite{D09} & $\Py$  &\cite{D10} & $\NPC$ &\cite{C09} & $\NPC$ &\cite{A13} & $\Py$  &\cite{D15}\\
\bottomrule
\end{tabular}
\caption{Problems vs Convexities: complexity results.}\label{conv-vs-prob}
\end{table}

\subsection{Two useful facts}

The next two propositions are useful. They say that if {\sc Interval Determination} or {\sc Convex Set} can be solved in polynomial time for some convexity space then some other problems listed in Section~\ref{problems} can also be solved in polynomial time, for the same convexity space.

\begin{proposition}
\label{prop:idpo}
Let $(V(G),\C)$ be any convexity space. If {\sc Interval Determination} can be solved in polynomial time for $(V(G),\C)$ then {\sc Convex Set} and {\sc Convex Hull Determination} can also be solved in polynomial time for $(V(G),\C)$.
\end{proposition}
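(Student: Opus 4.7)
The plan is to show that a polynomial-time algorithm for \textsc{Interval Determination} can be used as a black box to solve both \textsc{Convex Set} and \textsc{Convex Hull Determination} with only polynomially many calls, together with polynomial overhead.

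For \textsc{Convex Set}, given an input $(G,S)$, I would simply observe that $S$ is convex if and only if no vertex outside $S$ belongs to $I(S)$, since $S\subseteq I(S)$ always holds by definition~(\ref{def-of-interval}). So I would loop over every $z\in V(G)\setminus S$ and invoke the \textsc{Interval Determination} algorithm on $(G,S,z)$; answer ``yes'' precisely when all such calls return ``no''. This uses at most $n$ oracle calls and $O(n)$ extra work, hence runs in polynomial time.

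For \textsc{Convex Hull Determination}, given $(G,S,z)$, the plan is to compute $H(S)$ explicitly by iterating the interval operator. Set $S_0 := S$ and, for $i\geq 0$, construct $S_{i+1} := I(S_i)$ by calling the oracle on $(G, S_i, v)$ for each $v\in V(G)\setminus S_i$ and adding $v$ to $S_{i+1}$ whenever the answer is ``yes''. Since the sequence $S_0\subseteq S_1\subseteq\cdots$ is strictly increasing until it stabilizes, and is bounded by $V(G)$, the process halts after at most $n$ iterations with $S_i=H(S)$. I would then answer ``yes'' iff $z\in H(S)$. This amounts to $O(n^2)$ calls to the oracle plus polynomial bookkeeping.

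Neither reduction presents a real obstacle; the only thing to be careful about is the correctness of the iterative hull computation, which follows directly from the definition $H(S)=I^{i}(S)$ for sufficiently large $i$, already recorded in the preliminaries. Thus the whole argument is essentially two short reductions that package the oracle calls appropriately.
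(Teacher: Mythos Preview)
Your proposal is correct and mirrors the paper's own proof: compute $I(S)$ by querying the oracle on every vertex, test $I(S)=S$ for convexity, and iterate the interval operator $O(n)$ times to obtain $H(S)$. The only difference is expository---you spell out the oracle loop explicitly while the paper simply asserts that $I(S)$ is computable in polynomial time---but the argument is the same.
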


\begin{proof}

Let $S\subseteq V(G)$. Since {\sc Interval Determination} is in $\Py$ for $(V(G),\C)$, $I(S)$ can be computed in polynomial time. Let $i$ be the smallest index such $I^{i+1}(S) = I^{i}(S)$. Note that determining such an index $i$ as well as $I^{i}(S)$ can also be done in polynomial time, since $i=O(n)$. Therefore:

  \begin{itemize}
    \item If $I(S) = S$ then $S$ is a convex set, otherwise $S$ is not convex.
    \item If $z\in I^i(S)$ then $z\in H(S)$, otherwise $z\not\in H(S)$.
  \end{itemize}

By the above observations, the problems {\sc Convex Set} and {\sc Convex Hull Determination} can be solved in polynomial time for $(V(G),\C)$.
\end{proof}

Let $S\subseteq V(G)$. If $S$ is not convex then an {\em augmenting set} of $S$ is any set $S'$ such that $S\subset S'\subseteq H(S)$ (where the symbol $\subset$ stands for proper inclusion).

\begin{proposition}
\label{prop:cspo}
Let $(V(G),\C)$ be a convexity space. If there is a polynomial-time certification algorithm to solve {\sc Convex Set} for $(V(G),\C)$ that outputs an augmenting set when the problem has a negative answer then {\sc Convex Hull Determination} can also be solved in polynomial time for $(V(G),\C)$.
\end{proposition}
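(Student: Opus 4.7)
The plan is to compute $H(S)$ explicitly by iterating the hypothesized certifying algorithm, and then answer {\sc Convex Hull Determination} by a single membership test. I would set $S_0 = S$ and, at each step $i \geq 0$, run the algorithm on $S_i$: if it reports that $S_i$ is convex, halt and return $S_i$; otherwise let $S_{i+1}$ be the augmenting set it outputs and continue.

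Correctness hinges on the invariant $S_i \subseteq H(S)$ for all $i$, which I would prove by induction on $i$. The base case $S_0 = S$ is immediate. For the inductive step, the augmenting set $S_{i+1}$ must genuinely certify the non-convexity of $S_i$, meaning that its extra vertices are forced into $H(S_i)$; combined with the inductive hypothesis $S_i \subseteq H(S)$, monotonicity of the hull operator yields $H(S_i) \subseteq H(S)$ and hence $S_{i+1} \subseteq H(S)$. When the iteration halts at some step $k$ with $S_k$ convex, the chain $S \subseteq S_k \subseteq H(S)$ together with convexity of $S_k$ forces $H(S) \subseteq S_k$, so $S_k = H(S)$.

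For the running time, since $|S_{i+1}| > |S_i|$, the loop performs at most $n - |S|$ iterations, each consisting of a single polynomial-time call to the certification algorithm, giving a polynomial overall procedure. Once $H(S)$ has been produced, answering {\sc Convex Hull Determination} is a trivial membership test on $z$.

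The main obstacle, in my view, is pinning down the semantics of ``augmenting set'' in the context of a certifying algorithm. The literal definition only requires $S \subset S' \subseteq V(G)$, which by itself is too weak, since for instance $S' = V(G)$ would always qualify and the iteration would not compute $H(S)$. The certifying reading I rely on is that the extra vertices of $S'$ must be forced by non-convexity of $S_i$ and therefore belong to $H(S_i)$; this is what allows the invariant $S_i \subseteq H(S)$ to be maintained. Once this point is made explicit, the remainder of the argument is a routine induction plus a count of iterations.
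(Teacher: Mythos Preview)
Your approach is essentially the paper's: iterate the certifying test, augmenting at each failure, stop at the first convex $S_j$, and test membership of $z$ in $S_j$. The paper bounds the number of iterations by $|V(G)\setminus S|$ exactly as you do.

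Where you go beyond the paper is in the correctness argument. The paper simply asserts that ``if $z\in S_j$ then $z\in H(S)$'' without justification, and never proves $S_j\subseteq H(S)$. You explicitly maintain the invariant $S_i\subseteq H(S)$ and correctly observe that this step \emph{requires} the augmenting set returned by the algorithm to lie inside $H(S_i)$, something the paper's bare definition of augmenting set (any proper superset of $S$) does not guarantee. Your reading---that the certificate must genuinely witness non-convexity, hence its new vertices lie in $H(S_i)$---is the intended one, and your identification of this as the crux is spot on; the paper's own proof silently relies on the same assumption without stating it. So your argument is the same strategy, carried out with more care.
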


\begin{proof}

Let $S\subseteq V(G)$ and $z\in V(G)$. Since {\sc Convex Set} can be solved in polynomial time for $(V(G),\C)$, we can apply a convexity test to $S$ in polynomial time. If the test succeeds then $S$ is a convex set and, consequently, the convex hull of $S$ is $S$ itself; otherwise, the test fails and there exists a set $S_1$ (a certificate outputted by the test) that can be used to augment the original set $S$. Recall that $S$ is properly contained in $S_1$.

In order to establish the convex hull of $S$, we are going to apply the convexity test algorithm successively, always obtaining a set $S_{i+1}$ that augments $S_i$, until it cannot be augmented anymore. Let $j$ be the smallest index such that $S_j$ results in a convex set. Observe that $j\leq |V(G)\setminus S|$, since at least one vertex is added in each iteration. Moreover, each convexity test can be done in polynomial time. Thus, the entire process still remains polynomial.

If a vertex $z\in V(G)$ is such that $z\in S_j$ then $z\in H(S)$. Therefore, {\sc Convex Hull Determination} is in $\Py$ for $(V(G),\C)$. \end{proof}

Note that Propositions~\ref{prop:idpo} and~\ref{prop:cspo} can be used to fill some entries of Table~\ref{conv-vs-prob}. For example, since {\sc Interval Determination} is in $\Py$ for the geodesic convexity, by Proposition~\ref{prop:idpo} the problems {\sc Convex Set} and {\sc Convex Hull Determination} are also in $\Py$ for such convexity. The same applies to the $P_3$- and $P^*_3$- convexities. On the other hand, Proposition~\ref{prop:cspo} implies that {\sc Convex Hull Determination} is in $\Py$ for the monophonic convexity.

\section{A general framework for path convexities}\label{frame}

In this section, we propose a general framework for the study of path convexities.

From now on, we assume that every $n$-vertex graph $G$ has vertices labeled $1,2,\ldots,n$. A {\em length matrix} is a symmetric $n\times n$ matrix $M$ such that each entry $M(i,j)$, for $i,j\in V(G)$, is a natural number; in addition, all diagonal entries of $M$ are zero.

Let $A,B,C,D$ be four $n \times n$ length matrices. Suppose that $\P$ is the family of paths of $G$ such that an $ij$-path $P$ of $G$ is a member of $\P$ if and only if:
\begin{enumerate}
\item[(1)] $|P|\geq A(i,j)$;
\item[(2)] $|P|\leq B(i,j)$;
\item[(3)] all the chords in $P$ are of length at least $C(i,j)$;
\item[(4)] all the chords in $P$ are of length at most $D(i,j)$.
\end{enumerate}

Let $I_{\P}:2^{V(G)}\rightarrow 2^{V(G)}$ be the interval function associated with $\P$, and let $\C_{\P}$ be the family of subsets of $V(G)$ such that $S\in\C_{\P}$ if and only if $I_{\P}(S)=S$. Since $\P$ is a particular collection of paths of $G$, by Proposition~\ref{prop:convexity}, we have that $(V(G),\C_{\P})$ is a finite convexity space, equipped with interval function $I_{\P}$. Let us say that such a convexity space defines a {\em matrix path convexity}.

Again, we omit the subscript $\P$ when it is clear from the context.

Say that an $ij$-path $P$ {\em satisfies} matrices $A,B,C,D$ if all the conditions (1) to (4) above are satisfied by $P$.

\subsection{Putting the matrices as part of the input}\label{sec:input-matrices}

In the six problems listed in Section~\ref{problems}, the graph $G$ is always part of the input; however, the rule that determines which collection of paths of $G$ must be considered is {\em not} part of the input. More general versions of such problems are possible when the desired convexity space, expressed as a graph $G$ together with a set of four length matrices, is part of the input. For example, consider the following version of {\sc Convex Set}:

\noindent {\sc Matrix Convex Set}\\
Input: A graph $G$, four $n\times n$ length matrices $A,B,C,D$, and $S\subseteq V(G)$.\\
Question: Is $S$ convex under the matrix path convexity ruled by $A,B,C,D$?

All the remaining problems listed in Section~\ref{problems} can be restated analogously.

The next theorems say that such ``matrix problems'' are all hard. However, we shall see that restrictions on the matrices $A,B,C,D$ lead to interesting cases. In this regard, some types of length matrices are of special interest. For a graph $G$, the {\em distance matrix} of $G$ is the length matrix $M_{\mathit dist}$ with entries $M_{\mathit dist}(i,j)={\mathit dist}_G(i,j)$, for $i,j\in V(G)$. For a positive integer constant $k$, the $(n-k)${\em -matrix} and the $k${\em -matrix} are the length matrices $M_{n-k}$ and $M_k$ with off-diagonal entries all equal to, respectively, $n-k$ and $k$.

\begin{theorem}\label{the:mcs}
{\sc Matrix Convex Set} is Co-NP-complete.
\end{theorem}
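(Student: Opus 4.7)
My first step would be to verify that {\sc Matrix Convex Set} lies in Co-NP. A succinct certificate of non-convexity consists of two vertices $u,v\in S$, a vertex $z\in V(G)\setminus S$, and an explicit $uv$-path $P$ through $z$. To check the certificate I would test $A(u,v)\leq |P|\leq B(u,v)$, and for every pair of non-consecutive vertices of $P$ I would test whether they form an edge of $G$ and, if so, whether the corresponding chord length lies in the interval $[C(u,v),D(u,v)]$. All of this is clearly polynomial in the size of the input, so the complement is in NP.

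\textbf{Co-NP-hardness by reduction from {\sc Hamiltonian $st$-Path}.} For hardness I would reduce from the well-known NP-complete problem {\sc Hamiltonian $st$-Path}: given a graph $H$ on $n\geq 3$ vertices and two vertices $s,t\in V(H)$, does $H$ contain an $st$-path visiting every vertex? Given such an instance, I set $G:=H$, $S:=\{s,t\}$, and define the matrices by
\[
A(s,t)=A(t,s)=B(s,t)=B(t,s)=n-1,\qquad C(s,t)=C(t,s)=2,\qquad D(s,t)=D(t,s)=n,
\]
with all remaining off-diagonal entries set arbitrarily (they will never be consulted). This is plainly a polynomial-time construction.

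\textbf{Correctness.} Since $|S|=2$, the only pair relevant to $I(S)$ is $(s,t)$, and a vertex $z\in V(G)\setminus S$ enters $I(S)$ iff it lies on some $st$-path $P$ with $|P|=n-1$; the chord conditions are vacuous under the chosen $C,D$ because any chord automatically has length between $2$ and at most $n-1$. Now in a simple $n$-vertex graph an $st$-path of length $n-1$ uses all $n$ vertices, hence is Hamiltonian; conversely, every Hamiltonian $st$-path has length $n-1$ and contains every vertex of $V(G)\setminus\{s,t\}$. Therefore $S$ is \emph{non}-convex iff $H$ has a Hamiltonian $st$-path, so {\sc Matrix Convex Set} is Co-NP-hard, and combined with the Co-NP membership above it is Co-NP-complete.

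The reduction itself is essentially forced once one notices that the framework lets us pin down both minimum and maximum path lengths simultaneously; the only point that needs a bit of care is the observation that, because $S$ contains exactly two vertices, the matrices $A,B,C,D$ are effectively consulted at a single entry, which is what lets the Hamiltonicity information be encoded so cleanly.
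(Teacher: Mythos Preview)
Your proof is correct. The Co-NP membership argument matches the paper's, but your hardness reduction is genuinely different and simpler.

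The paper reduces from the problem of deciding whether a graph $H$ contains a \emph{chordless} $ij$-path through a prescribed vertex $z$ (Haas--Hoffmann). It builds a new graph $G$ by subdividing every edge of $H$ incident to $z$ with $n-1$ new degree-two vertices, and then sets $A=M_{2n}$, $B=M_{3n-3}$, $C=D=M_1$ so that only chordless paths of length in $[2n,3n-3]$ count; the subdivision guarantees that any such path must go through $z$, and the chordless constraint ($C=D=1$) is essential to recover a chordless path in $H$.

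Your reduction from {\sc Hamiltonian $st$-Path} is the identity on the graph and exploits the length constraints alone: with $A(s,t)=B(s,t)=n-1$ and vacuous chord bounds, an admissible $st$-path is exactly a Hamiltonian $st$-path. This is cleaner and shows that hardness already comes from the path-length control, without ever using the chord parameters $C,D$ nontrivially. The paper's construction, by contrast, shows hardness even when the matrices are \emph{constant} (the same value at every off-diagonal entry) and the chord constraints are active; it also transfers verbatim to their proof of Theorem~\ref{the:mchd}. Your reduction would work there as well, so the main trade-off is that the paper's argument exercises the chord part of the framework while yours isolates the length part.
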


\begin{proof}
A certificate for a negative answer to {\sc Matrix Convex Set} is a triple $i,j,z$ (with $i,j\in S$ and $z\not\in S$) and an $ij$-path $P$ in $G$ containing $z$ such that $P$ satisfies $A$ to $D$. Such a certificate can be clearly checked in polynomial time. Therefore, {\sc Matrix Convex Set} is in $\CONP$.

To prove that {\sc Matrix Convex Set} is Co-NP-complete, we show a reduction from the following NP-complete problem~\cite{HS06}: given three distinct vertices $i,j,z$ in a graph $H$, decide whether there is a chordless $ij$-path passing through $z$.

Let $G$ be the graph obtained from $H$ by replacing each edge $(s,z)$ incident to $z$ by an $sz$-path containing $n-1$ internal vertices of degree two, where $n=|V(H)|$. In other words, $G$ is a subdivision of $H$ obtained by subdividing each edge incident to $z$ using $n-1$ vertices. Set $A$ and $B$ as the length matrices with off-diagonal entries all equal to, respectively, $2n$ and $3n-3$. Also, set $C=D=M_1$ (the $k$-matrix for $k=1$). Finally, set $S=\{i,j\}$. Note that the collection of paths $\P$ defined by $A,B,C,D$ contains the chordless paths with length at least $2n$ and at most $3n-3$.

Suppose that there is a chordless $ij$-path $P_H$ in $H$ passing through $z$. Write $P_H=(s_0=i,s_1,\ldots,s_{h-1},s_h=z,s_{h+1},\ldots,s_l=j)$. Then there is a chordless $ij$-path $P_G$ in $G$ obtained from $P_H$ by subdividing edges $(s_{h-1},z)$ and $(z,s_{h+1})$ using $n-1$ vertices of degree two for each edge. Note that $|P_G|=(l-2)+2n$. Since $2\leq l\leq n-1$, we have $2n\leq |P_G|\leq 3n-3$. Therefore, $P_G$ satisfies $A,B,C,D$, and its existence implies that $S$ is not convex.

Conversely, suppose that $S$ is not convex. Then there is a chordless $ij$-path $P_G$ of length at least $2n$ passing through some vertex of $G$ lying outside $S$. But, by the construction of $G$, all the $ij$-paths of length at least $2n$ must necessarily pass through $z$. Let $P_{hz}$ be the subpath of $P_G$ with length $n$ that starts at a vertex $h$ and ends at $z$. Similarly, let $P_{zh'}$ be the subpath of $P_G$ with length $n$ that starts at $z$ and ends at a vertex $h'$. By replacing $P_{hz}$ and $P_{zh'}$ by edges $(h,z)$ and $(z,h')$, we obtain a chordless $ij$-path in $H$ passing through $z$. This completes the proof.
\end{proof}

\begin{theorem}\label{the:mid}
{\sc Matrix Interval Determination} is NP-complete.
\end{theorem}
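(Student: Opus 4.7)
The plan is to mimic the reduction used for Theorem~\ref{the:mcs} almost verbatim, since the construction there already delivers a witness path that is forced to pass through the designated vertex $z$.

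First I would check membership in $\NP$: given the input $(G,A,B,C,D,S,z)$, a certificate is a pair $i,j\in S$ together with an $ij$-path $P$ that contains $z$. Verifying that $|P|\in [A(i,j),B(i,j)]$ and that every chord of $P$ has length in $[C(i,j),D(i,j)]$ can be done by scanning the sequence of vertices of $P$ and consulting the adjacency relation of $G$, all in polynomial time.

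For hardness, I would again reduce from the NP-complete problem of~\cite{HS06}: decide, given distinct vertices $i,j,z$ of a graph $H$, whether there is a chordless $ij$-path passing through $z$. Given such an instance, construct the subdivision $G$ of $H$ exactly as in the proof of Theorem~\ref{the:mcs} (subdividing each edge incident to $z$ by $n-1$ degree-two vertices, where $n=|V(H)|$), set $A$ and $B$ to have all off-diagonal entries equal to $2n$ and $3n-3$ respectively, set $C=D=M_1$, and put $S=\{i,j\}$ together with the distinguished vertex $z$ playing the role of the query vertex in {\sc Matrix Interval Determination}. The collection $\P$ that these matrices define is precisely the family of chordless $ij$-paths of length between $2n$ and $3n-3$ in $G$.

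The correctness argument is essentially the one already written for Theorem~\ref{the:mcs}, and the key observation I would highlight is that the direction is even cleaner here: $z\in I(S)$ means there is an $ij$-path in $\P$ through $z$; conversely, by the construction of $G$ every $ij$-path of length at least $2n$ must pass through $z$, so producing any non-trivial element of $I(S)\setminus S$ automatically produces a path through $z$, and the contraction-of-subpaths argument in Theorem~\ref{the:mcs} then yields the desired chordless $ij$-path through $z$ in $H$. Conversely, the chordless path in $H$ through $z$ lifts to a chordless $ij$-path of length between $2n$ and $3n-3$ in $G$ that by construction contains $z$, witnessing $z\in I(S)$.

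The only real obstacle is making sure that the witness path is forced through $z$ specifically rather than merely producing some vertex in $I(S)\setminus S$; but this is built into the subdivision of $G$ (only edges incident to $z$ are subdivided, so any sufficiently long $ij$-path must use $z$), so no new idea beyond Theorem~\ref{the:mcs} is required. I would therefore present the proof as a short adaptation that reuses the same construction and appeals to the same combinatorial analysis, merely repointing the conclusion from ``$S$ is not convex'' to ``$z\in I(S)$''.
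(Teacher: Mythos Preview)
Your argument is correct, but the paper proves hardness by a much shorter route: it simply observes that {\sc Interval Determination} is already known to be NP-complete for the monophonic convexity (Table~\ref{conv-vs-prob}), and that choosing $A=M_2$, $B=M_{n-1}$, $C=D=M_1$ makes the collection $\P$ exactly the induced paths of $G$; hence the monophonic problem is a restriction of {\sc Matrix Interval Determination}. No new construction is needed. Your approach, by contrast, recycles the subdivision reduction from Theorem~\ref{the:mcs}; this is self-contained (it does not rely on the cited monophonic result) and nicely reuses machinery already in the paper, but it is considerably heavier than the one-line restriction argument. Incidentally, in your backward direction the remark that ``every $ij$-path of length at least $2n$ must pass through $z$'' is superfluous here: since you are asking whether $z\in I(S)$, the certifying path already contains $z$ by definition, and you can go straight to the contraction step.
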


\begin{proof}
A certificate for a positive answer to {\sc Matrix Interval Determination} is a pair $i,j$ (with $i,j\in S$) and an $ij$-path $P$ in $G$ containing $z$ (recall that $z$ is part of the input) such that $P$ satisfies $A,B,C,D$. Moreover, this certificate can be checked in polynomial time. Therefore, {\sc Matrix Interval Determination} is in $\NP$.

To prove that {\sc Matrix Interval Determination} is NP-complete, recall from Table~\ref{conv-vs-prob} that {\sc Interval Determination} is NP-complete for the monophonic convexity. If $A=M_2$, $B=M_{n-1}$, and $C=D=M_1$, the collection of paths $\P$ associated with such matrices is precisely the collection of induced paths of $G$. Then {\sc Interval Determination} for the monophonic convexity is a restriction of {\sc Matrix Interval Determination}, i.e., the former problem is NP-complete.
\end{proof}

\begin{theorem}\label{the:mchd}
{\sc Matrix Convex Hull Determination} is NP-complete.
\end{theorem}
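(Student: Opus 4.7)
My plan is to prove Theorem \ref{the:mchd} in two parts: membership in NP via a polynomial-size derivation certificate, and NP-hardness by reusing the reduction that appears in the proof of Theorem \ref{the:mcs}.

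For NP membership, I will exhibit a short certificate of the form ``a sequence of $I$-derivations leading to $z$''. If $z \in H(S)$ then $z \in I^{t}(S)$ for some $t \leq n$, so I can produce an ordered list of vertices $z_1, z_2, \ldots, z_t = z$ and, for each index $k$, a pair of endpoints $u_k, v_k \in S \cup \{z_1, \ldots, z_{k-1}\}$ together with an explicit $u_k v_k$-path $P_k$ in $G$ passing through $z_k$ that satisfies the matrices $A, B, C, D$. The entire certificate has size $O(n^{2})$, and checking that each $P_k$ conforms to the four matrix constraints takes polynomial time, so {\sc Matrix Convex Hull Determination} lies in $\NP$.

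For NP-hardness, I plan to reduce from the same NP-complete chordless-path problem of \cite{HS06} used in Theorem \ref{the:mcs}: given a graph $H$ and three distinct vertices $i, j, z$, decide whether $H$ has a chordless $ij$-path passing through $z$. I will reuse verbatim the construction of Theorem \ref{the:mcs}: let $G$ be the subdivision of $H$ obtained by replacing every edge $(s,z)$ incident to $z$ by an $sz$-path with $n-1$ new internal vertices of degree two; let $A$ and $B$ be the length matrices with off-diagonal entries equal to $2n$ and $3n-3$ respectively; let $C = D = M_{1}$; and let $S = \{i, j\}$. The output instance of {\sc Matrix Convex Hull Determination} asks whether $z \in H(\{i, j\})$.

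The correctness of the reduction amounts to showing that, in this specific instance, $z \in H(\{i, j\})$ if and only if $H$ has a chordless $ij$-path through $z$. The forward direction is direct: such a path in $H$ lifts, by the already-proven half of Theorem \ref{the:mcs}, to a chordless $ij$-path in $G$ of length in $[2n, 3n-3]$ through $z$, yielding $z \in I(\{i, j\}) \subseteq H(\{i, j\})$. For the converse, I will rely on the observation, already implicit in Theorem \ref{the:mcs}, that every chordless $ij$-path of $G$ of length at least $2n$ necessarily passes through $z$, since each subdivision chain is built from degree-two internal vertices and can be exited only through its $V(H)$-endpoint or through $z$. Hence if no chordless $ij$-path through $z$ exists in $H$, then no $ij$-path of $G$ satisfying $A, B, C, D$ exists at all, which forces $I(\{i, j\}) = \{i, j\}$ and therefore $H(\{i, j\}) = \{i, j\}$, excluding $z$.

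The step I expect to be the main obstacle is precisely this last one: ruling out the scenario in which repeated applications of $I$ eventually pull $z$ into the convex hull even though $z \notin I(\{i, j\})$. The argument hinges on the specific choice of lower length bound $2n$ in matrix $A$, which is what forces the single iteration $I(\{i, j\})$ to collapse back to $\{i, j\}$ in the negative case, ensuring that no subsequent iteration can add anything new. Once this is in place, combining the NP-membership certificate with the reduction above yields the NP-completeness of {\sc Matrix Convex Hull Determination}.
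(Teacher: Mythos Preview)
Your proposal is correct and follows essentially the same approach as the paper: the NP-membership certificate is the same sequence-of-paths idea (you add one vertex per step, the paper adds one path per step, which is equivalent), and the hardness reduction is the identical subdivision construction from Theorem~\ref{the:mcs}. The only cosmetic difference is in the converse direction of the reduction: the paper argues directly that if $z\in H(\{i,j\})$ then the \emph{first} path $P_1$ in any derivation sequence already has endpoints $\{i,j\}$ and length $\geq 2n$, hence contains $z$; you argue the contrapositive by showing $I(\{i,j\})=\{i,j\}$ in the negative case---both rest on the same key observation that every chordless $ij$-path in $G$ of length at least $2n$ must traverse $z$.
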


\begin{proof}
A certificate for a positive answer to {\sc Matrix Convex Hull Determination} is formed by a sequence of paths $P_1,P_2,\ldots,P_r$ such that $r\leq n-1$, $z\in V(P_r)$, and each $P_k$ is an $i_kj_k$-path satisfying matrices $A,B,C,D$ with $i_k,j_k \in S\cup V(P_1)\cup\cdots\cup V(P_{k-1})$. It is easy to see that such a certificate can be checked in polynomial time. Therefore, {\sc Matrix Convex Hull Determination} is in $\NP$.

For the hardness proof, we use the same reduction described in Theorem~\ref{the:mcs}. Again, if there is a chordless $ij$-path $P_H=(i,\ldots,h,z,h',\ldots,j)$ in $H$ then there is a chordless $ij$-path $P_G$ in $G$ obtained from $P_H$ by subdividing edges $(h,z)$ and $(z,h')$, as explained in the proof of Theorem~\ref{the:mcs}, such that $P_G$ satisfies $A,B,C,D$. Therefore, $z\in I(S)\subseteq H(S)$.

Conversely, suppose that $z\in H(S)$. Then there is a sequence of paths $P_1,P_2,\ldots,P_r$ such that $z\in V(P_r)$ and, for $1\leq k\leq r$, $P_k$ is an $i_kj_k$-path satisfying matrices $A,B,C,D$, where $i_k,j_k \in S\cup V(P_1)\cup\cdots\cup V(P_{k-1})$.

Note that $i_1,j_1\in S$. Thus, $\{i_1,j_1\}=\{i,j\}$. In addition, $|P_1|\geq 2n$. But, by the construction of $G$, all the $ij$-paths of length at least $2n$ must necessarily pass through $z$. Hence, $z\in V(P_1)$. The rest of the proof follows as in the proof of Theorem~\ref{the:mcs}: let $P_{hz}$ (resp., $P_{z h'}$) be the subpath of $P_1$ with length $n$ starting at some $h$ (resp., at $z$) and ending at $z$ (resp., at some $h'$). Replacing $P_{hz}$ and $P_{zh'}$ by edges $(h,z)$ and $(z,h')$ produces a chordless $ij$-path in $H$ passing through $z$.
\end{proof}

\begin{theorem}\label{the:mcn-min-mhn} \ \ \ $1.$ {\sc Matrix Convexity Number} is NP-hard.\\
\hspace*{2.9cm} $2.$ {\sc Matrix Interval Number} is NP-complete.\\
\hspace*{2.9cm} $3.$ {\sc Matrix Hull Number} is NP-complete.
\end{theorem}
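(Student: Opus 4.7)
The plan is to prove all three parts by the same trick already used in Theorem~\ref{the:mid}: choose the matrices $A,B,C,D$ so that the matrix path convexity coincides with a previously studied convexity for which the corresponding (non-matrix) problem is already NP-complete, making the non-matrix problem a polynomial restriction of its matrix version. Inspecting Table~\ref{conv-vs-prob}, I see that {\sc Convexity Number} and {\sc Interval Number} are both NP-complete for the monophonic convexity, while {\sc Hull Number} is NP-complete for the geodesic convexity; monophonic will not work for Part~3, since {\sc Hull Number} is polynomial there. For the two ``NP-complete'' claims I must additionally exhibit polynomial-size certificates. For {\sc Matrix Convexity Number} I claim only hardness, because Theorem~\ref{the:mcs} shows that checking convexity is already Co-NP-hard, so membership in $\NP$ is not evident.

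For the hardness of {\sc Matrix Convexity Number} and {\sc Matrix Interval Number}, I set $A=M_2$, $B=M_{n-1}$, $C=D=M_1$: as already argued in the proof of Theorem~\ref{the:mid}, $\P$ is then precisely the collection of induced paths of $G$, so the matrix convexity coincides with the monophonic one, and the respective non-matrix problems (NP-complete per Table~\ref{conv-vs-prob}) become polynomial restrictions of the matrix versions. For {\sc Matrix Hull Number}, I instead set $A=B=M_{\mathit dist}$ and $C=D=M_1$; then every path in $\P$ has length exactly ${\mathit dist}_G(i,j)$, hence is a shortest $ij$-path (and is automatically chordless, so the chord conditions are redundant), so $\P$ equals the collection of shortest paths of $G$ and the matrix convexity coincides with the geodesic convexity, reducing {\sc Hull Number} on the geodesic convexity to {\sc Matrix Hull Number}.

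Membership in $\NP$ is obtained via compact certificates. For {\sc Matrix Interval Number}, a certificate consists of a set $S$ with $|S|\leq r$ together with, for each $z\in V(G)\setminus S$, a pair $i_z,j_z\in S$ and an $i_zj_z$-path $P_z$ through $z$ satisfying $A,B,C,D$; this has size $O(n^2)$ and is polynomially checkable. For {\sc Matrix Hull Number}, the certificate mirrors the one used in the proof of Theorem~\ref{the:mchd}: a set $S$ with $|S|\leq r$ together with a sequence of paths $P_1,\ldots,P_t$, each $P_k$ an $i_kj_k$-path satisfying $A,B,C,D$ with $i_k,j_k\in S\cup V(P_1)\cup\cdots\cup V(P_{k-1})$, and with $S\cup V(P_1)\cup\cdots\cup V(P_t)=V(G)$; since we may assume that each $P_k$ contributes at least one new vertex, $t\leq n-|S|$, keeping the certificate polynomial and polynomially checkable. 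The main obstacle I anticipate is purely bookkeeping --- confirming that the specific matrix choices really select \emph{exactly} the intended family of paths --- but this reduces to the observation, already used in Theorems~\ref{the:mid} and~\ref{the:mchd}, that $C=D=M_1$ forces chordlessness because chords are defined to have length at least $2$, together with the fact that shortest paths are automatically induced.
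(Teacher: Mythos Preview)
Your proposal is correct and follows essentially the same approach as the paper: exhibit polynomial certificates for {\sc Matrix Interval Number} and {\sc Matrix Hull Number} (your certificates match the paper's almost verbatim), and obtain hardness by restriction to a known convexity. The only cosmetic difference is that the paper uses the geodesic convexity uniformly for all three hardness reductions---since {\sc Convexity Number}, {\sc Interval Number}, and {\sc Hull Number} are all NP-complete there---whereas you split between monophonic (Parts~1 and~2) and geodesic (Part~3); both choices are valid.
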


\begin{proof}
We first prove that both {\sc Matrix Interval Number} and {\sc Matrix Hull Number} are in $\NP$.


A certificate for a positive answer to {\sc Matrix Interval Number} consists of a set $S\subseteq V(G)$ of size at most $r$, and a collection of paths $\{P_z\mid z\in V(G)\setminus S\}$ such that for each path $P_z$ there exist $i_z,j_z\in S$ for which $P_z$ is an $i_zj_z$-path containing $z$ and satisfying $A,B,C,D$. Since the number of paths in the collection is $O(n)$, {\sc Matrix Interval Number} is in $\NP$.

A certificate for a positive answer to {\sc Matrix Hull Number} consists of a set $S$ of size at most $r$ and a sequence of paths $P_1,P_2,\ldots,P_s$ such that:\\
\hspace*{1cm} (a) $s\leq n-1$;\\
\hspace*{1cm} (b) $V(G)\setminus S\subseteq V(P_1)\cup\cdots\cup V(P_s)$;\\
\hspace*{1cm} (c) for $1\leq k\leq s$, $P_k$ is an $i_kj_k$-path satisfying matrices $A,B,C,D$,\\
\hspace*{1.7cm} with $i_k,j_k \in S\cup V(P_1)\cup\cdots\cup V(P_{k-1})$.

Since such a certificate can be checked in polynomial time, {\sc Matrix Hull Number} is in $\NP$.

Now we describe the hardness proof for the three problems. As in Theorem~\ref{the:mid}, we use a proof by restriction. Recall from Table~\ref{conv-vs-prob} that {\sc Convexity Number}, {\sc Interval Number}, and {\sc Hull Number} are all NP-complete for the geodesic convexity. If $A=B=M_{\mathit dist}$ and $C=D=M_1$, the collection of paths $\P$ associated with such matrices is precisely the collection of shortest paths of $G$. This means that {\sc Convexity Number}, {\sc Interval Number}, and {\sc Hull Number} for the geodesic convexity are, respectively, restrictions of {\sc Matrix Convexity Number}, {\sc Matrix Interval Number}, and {\sc Matrix Hull Number}. Hence, the theorem follows.
\end{proof}

\subsection{Constant matrices: the $(a,b,c,d)$-path convexity}\label{sec:constant}

In this section, we study the case in which there are constants $a,b,c,d$ such that $A=M_a$, $B=M_b$, $C=M_c$, and  $D=M_d$. In this scenario we can assume that the matrices are {\em not} part of the input, because length restrictions are known in advance. This gives rise to ``constant matrix versions'' of the problems studied in the preceding subsection. For example, consider the following problems:

\noindent {\sc $(a,b,c,d)$-Convex Set}\\
Input: A graph $G$ and a set $S\subseteq V(G)$.\\
Question: Is $S$ convex under the matrix path convexity ruled by $A,B,C,D$?\\
{\em Equivalently:} Is $S$ convex under the path convexity defined by the collection $\P(a,b,c,d)$ of paths of $G$ whose length is at least $a$ and at most $b$, and whose chords have length at least $c$ and at most $d$?

\noindent {\sc $(a,b,c,d)$-Interval Determination}\\
Input: A graph $G$, a set $S\subseteq V(G)$, and a vertex $z\in V(G)$.\\
Question: Does $z$ belong to $I(S)$, where $I$ is the interval function associated with the collection $\P(a,b,c,d)$ of paths of $G$?

The remaining matrix problems can be restated analogously.

The path convexity for which the path/chord length restrictions are ruled by four constants $a,b,c,d$ as explained above is called {\em $(a,b,c,d)$-path convexity}.

\begin{theorem}\label{the:cmid}
{\sc $(a,b,c,d)$-Interval Determination} is in $\Py$.
\end{theorem}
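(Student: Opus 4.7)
The plan is to exploit the fact that $a, b, c, d$ are fixed constants (not part of the input), so any valid witness path has bounded length and can be found by brute-force enumeration.

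First I would dispose of the trivial case: if $z \in S$, output YES. Otherwise, by the definition of $I(S)$, the instance is positive iff there exist $u, v \in S$ and a $uv$-path $P = (u = w_0, w_1, \ldots, w_k = v)$ in $G$ such that (i) $a \le k \le b$, (ii) $z \in \{w_1, \ldots, w_{k-1}\}$, and (iii) every chord of $P$, i.e., every edge $w_i w_j \in E(G)$ with $|i-j| \ge 2$, has length in $[c, d]$. The algorithm enumerates every candidate sequence $(w_0, w_1, \ldots, w_k)$ with $a \le k \le b$, $w_0, w_k \in S$, and internal vertices drawn from $V(G)$, then tests it against conditions (i)--(iii).

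For each fixed $k \in \{a, a+1, \ldots, b\}$, the endpoints contribute at most $|S|^2$ choices and the $k-1$ internal vertices contribute at most $n^{k-1}$ choices, giving at most $(b - a + 1) \cdot n^2 \cdot n^{b-1} = O(n^{b+1})$ candidate sequences in total. For each candidate, one checks in $O(b)$ time that consecutive vertices are distinct and adjacent in $G$, in $O(b)$ time that $z$ occurs among the internal vertices, and in $O(b^2)$ time that every non-consecutive pair of vertices either is a non-edge of $G$ or else spans an index difference in $[c, d]$. Since $b$ is a constant, each candidate is processed in $O(1)$ time, and the total running time is $O(n^{b+1})$, hence polynomial.

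Correctness is immediate: a YES answer is certified by the witness path found, and every possible witness (being of length at most $b$) is considered by the enumeration, so a NO answer is also correct. There is no real obstacle; the only mild point to spell out is that the chord constraints (iii) are purely local to the candidate sequence, so verifying them does not require any further search beyond checking a constant number of potential chord edges in $G$.
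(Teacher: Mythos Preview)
Your proposal is correct and follows essentially the same brute-force enumeration as the paper: since $b$ is a fixed constant, there are only polynomially many candidate vertex sequences of length at most $b$, and each can be tested in constant time. One tiny wording slip: when verifying a candidate sequence you should check that \emph{all} vertices are pairwise distinct (so that it is genuinely a path), not merely that consecutive ones are; this costs an extra $O(b^2)$ per candidate and changes nothing in the overall argument.
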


\begin{proof}
Note that for a constant $\ell$ there are $O(n^{\ell+1})$ paths of length $\ell$ in $G$. Thus there are $O(\sum_{\ell=a}^{b} n^{\ell+1})$ paths in $G$ with length at least $a$ and at most $b$. Now, for each pair $(i,j)$ of distinct vertices in $S$, there are $O(\sum_{\ell=a}^{b} n^{\ell-1})$ $ij$-paths with length at least $a$ and at most $b$, because $i$ and $j$ are the fixed endpoints of each such path. But $S$ contains $O(|S|^2)$ pairs of distinct vertices. This amounts to checking $O(|S|^2 \sum_{\ell=a}^{b} n^{\ell-1})$ paths. Since a path of length $\ell$ can have at most $O(\ell^2)$ chords, we can select all the $ij$-paths in $\P(a,b,c,d)$ with $i,j\in S$ in $O(|S|^2 \sum_{\ell=a}^{b} \ell^2 n^{\ell-1})$ time. Finally, checking whether $z$ belongs to one of such paths can be done in $O(1)$ time per path, because the lenght of each path is bounded by $b$. This gives a na\"ive polynomial-time brute-force algorithm to check whether $z\in I(S)$.
\end{proof}

By Proposition~\ref{prop:idpo}, we have:

\begin{corollary}
{\sc $(a,b,c,d)$-Convex Set} and {\sc $(a,b,c,d)$-Convex Hull Determination} are in $\Py$. \ \ $\blacksquare$
\end{corollary}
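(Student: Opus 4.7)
The plan is to invoke Proposition~\ref{prop:idpo} directly, using the $(a,b,c,d)$-path convexity in the role of the generic convexity space $(V(G),\C)$. Theorem~\ref{the:cmid} supplies exactly the hypothesis needed: the interval function $I=I_{\P(a,b,c,d)}$ can be evaluated on any input set in polynomial time. So the only thing to verify is that nothing in Proposition~\ref{prop:idpo} is specific to a particular convexity, which is immediate since that proposition was stated and proved for arbitrary convexity spaces.

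Concretely, I would spell out the two consequences as follows. For {\sc $(a,b,c,d)$-Convex Set}, compute $I(S)$ by Theorem~\ref{the:cmid} and test the equality $I(S)=S$; this single application of interval determination decides convexity. For {\sc $(a,b,c,d)$-Convex Hull Determination}, iterate $I^{0}(S)=S$, $I^{i+1}(S)=I(I^{i}(S))$ until a fixed point is reached. Since each iteration adds at least one vertex, the process terminates after $O(n)$ rounds, each polynomial by Theorem~\ref{the:cmid}, and one then simply checks whether $z$ appears in the fixed point $H(S)$. Because both procedures make only a polynomial number of calls to a polynomial subroutine, they run in polynomial time, proving the corollary.

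There is essentially no obstacle here; the only thing worth flagging is that Proposition~\ref{prop:idpo} already encapsulates the iterative hull computation, so no new argument is required beyond citing the two prior results.
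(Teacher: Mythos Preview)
Your proposal is correct and matches the paper's approach exactly: the paper simply writes ``By Proposition~\ref{prop:idpo}, we have:'' before stating the corollary, and your proof unpacks precisely that citation together with Theorem~\ref{the:cmid}. Your additional spelling-out of the two procedures (testing $I(S)=S$ and iterating to a fixed point) is just a recap of the proof of Proposition~\ref{prop:idpo} itself, so nothing new is added or needed.
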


As for the other three problems, {\sc $(a,b,c,d)$-Convexity/Interval/Hull Number}, we remark that the special cases
$$(a=2,b=2,c=1,d=2) \mbox{ and } (a=2,b=2,c=1,d=1)$$
correspond precisely to the $P_3$- and $P^*_3$- convexities, as indicated in Table~\ref{convexities}. For both convexities, all the three problems are NP-complete (see Table~\ref{conv-vs-prob}).

\subsection{$(a,b,c,d)$-path convexity and bounded treewidth graphs}

In this section, we investigate the complexity of the six $(a,b,c,d)$-path convexity problems in Section~\ref{sec:constant} when applied to bounded treewidth graphs. As we shall see, linear-time methods will be possible in this case.

Let $G$ be a graph, $T$ a tree, and $V=(V_t)_{t \in T}$ a family of vertex sets $V_t\subseteq V(G)$ indexed by the vertices $t$ of $T$. The pair $(T,V)$ is called a {\em tree-decomposition of $G$} if it satisfies the following three conditions~\cite{diestel}:
\begin{description}
\item [(T1)] $V(G)=\bigcup_{t\in T} V_t$;
\item [(T2)] for every edge $e\in G$ there exists $t\in T$ such that both ends of $e$ lie in $V_t$;
\item [(T3)] if $V_{t_i}$ and $V_{t_j}$ both contain a vertex $v$ then $v\in V_{t_k}$ for all vertices $t_k$ in the path between $t_i$ and $t_j$.
\end{description}

The {\em width} of $(T,V)$ is the number $\max\{|V_t|-1 \mid t\in T\}$, and the {\em treewidth} $tw(G)$ of $G$ is the minimum width of any tree-decomposition of $G$.

Graphs of treewidth at most $k$ are called {\em partial $k$-trees}. Some graph classes with bounded treewidth include: forests (treewidth 1); pseudoforests, cacti, series-parallel graphs, and outerplanar graphs (treewidth at most 2); Halin graphs and Apollonian networks (treewidth at most 3)~\cite{B98,Br99}. Control flow graphs arising in the compilation of structured programs also have bounded treewidth (at most 6)~\cite{T98}.


In 1990, Courcelle~\cite{C90} stated that for any graph $G$ with treewidth bounded by a constant $k$ and for any graph property $\Pi$ that can be formulated in CMSOL$_2$ ({\em Counting Monadic Second-Order Logic} where quantification over sets of vertices or edges and predicates testing the size of sets modulo constants are allowed), there is a linear-time algorithm that decides if $G$ satisfies $\Pi$ ~\cite{C90,C93,C97,Co11}. This result has been extended a number of times~\cite{Ar91,Bo92,Co00,Hl08}. In particular, Arnborg and Lagergren~\cite{Ar91} study optimization problems over sets definable in Counting Monadic Second-Order Logic.


By Courcelle's meta-theorems based on CMSOL$_2$~\cite{C90,C93,C97}, obtaining linear-time methods to solve the six problems of Section~\ref{sec:constant} on bounded treewidth graphs   amounts to showing that the related properties are expressible in CMSOL$_2$.


\begin{theorem}\label{the:mid_tw}
{\sc $(a,b,c,d)$-Interval Determination} is solvable in linear time on bounded treewidth graphs.
\end{theorem}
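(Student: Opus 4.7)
The plan is to invoke Courcelle's meta-theorem as set up at the end of the preceding paragraph: it suffices to exhibit a CMSOL$_2$ sentence, with the inputs $S$ and $z$ appearing as free variables, whose models are exactly the triples $(G,S,z)$ for which $z\in I_{\P(a,b,c,d)}(S)$. Once such a sentence is constructed, Courcelle's theorem immediately yields the claimed linear-time algorithm on graphs of bounded treewidth.

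The decisive observation is that since $b$ is a constant, every path of $\P(a,b,c,d)$ has at most $b+1$ vertices, so the sentence requires no monadic second-order quantification over paths: a first-order prefix of bounded length suffices. Concretely, I would express the target property as
$$\bigvee_{\ell=a}^{b}\; \exists v_0\, \exists v_1 \cdots \exists v_\ell \; \varphi_\ell(v_0,\ldots,v_\ell),$$
where $\varphi_\ell$ is the conjunction of (i) pairwise distinctness of $v_0,\ldots,v_\ell$; (ii) $v_iv_{i+1}\in E(G)$ for each $0\le i<\ell$; (iii) $v_0\in S$ and $v_\ell\in S$; (iv) $\bigvee_{0\le i\le \ell}(z=v_i)$; and (v) for every pair $0\le i<j\le \ell$ with $j-i\ge 2$, the implication $v_iv_j\in E(G)\Rightarrow (c\le j-i\le d)$. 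Because $\ell\le b$ and $a,b,c,d$ are fixed constants, each $\varphi_\ell$ is a fixed Boolean combination of atomic graph predicates and membership statements $v_i\in S$, and the whole disjunction has size $O(b^3)$, i.e.\ constant size.

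The only thing to be careful about is the treatment of $S$ and $z$ as part of the CMSOL$_2$ input. In the standard formulation one regards $S$ as a free unary set variable and $z$ as a free element variable (equivalently, as a singleton set), so the algorithm produced by Courcelle's theorem takes $(G,S,z)$ and decides the property in time linear in $|V(G)|+|E(G)|$; this does not alter the treewidth of the underlying graph. Beyond this bookkeeping, I do not anticipate a real obstacle: the chord condition, often the delicate part when the path is unbounded, is here a finite conjunction over index pairs $(i,j)$ that are fixed at the moment the formula is written, so chord length never has to be quantified over dynamically. With the sentence in hand, Theorem~\ref{the:mid_tw} follows directly from~\cite{C90,C93,C97}.
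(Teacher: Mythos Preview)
Your argument is sound and the reduction to Courcelle's theorem is carried out correctly, with one small omission: by definition $I_{\P}(S)=S\cup\{\,z\notin S\mid\ldots\,\}$, so $z\in I(S)$ holds whenever $z\in S$, and your disjunction over $\ell\in\{a,\ldots,b\}$ does not cover this case (take $S=\{z\}$ and $a\ge 2$). Prepending the disjunct ``$z\in S$'' fixes this immediately; the paper's formula begins with exactly that clause.

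Your route differs from the paper's in an instructive way. The paper quantifies existentially over an edge set $P$, invokes the CMSOL$_2$-expressibility of ``$P$ is a $uv$-path'', bounds $|P|$ via cardinality predicates, and enforces the chord restriction by a universal second-order quantification over subpaths $P'\subseteq P$. You instead exploit the fact that $b$ is a fixed constant to name the path vertices explicitly with a bounded first-order prefix $\exists v_0\cdots\exists v_\ell$; the chord condition then collapses to a finite conjunction indexed by pairs $(i,j)$ known at formula-construction time. Your formula is therefore purely first-order (with $S$ as a free set parameter), which is conceptually simpler and avoids the auxiliary MSO encoding of ``is a path''. The paper's version, in contrast, is phrased so that it would still make sense if the path length were not a priori bounded, and it mirrors the abstract definition of $\P(a,b,c,d)$ more directly; but for the statement at hand your first-order encoding is the more economical one.
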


\begin{proof}
It is enough to show that the property ``$z\in I(S)$'' is CMSOL$_2$-expressible. Given $G$, $S$, and $z$, we construct $\varphi(G,S,z,a,b,c,d)$ such that $z \in I(S) \Leftrightarrow \varphi(G,S,z,a,b,c,d)$ as follows:

\begin{equation}\label{IDCMSOL1}
\begin{split}
(~z \in S~)~{\vee}\\
(~\exists~ u,v,P \mbox{ } & (~u,v \in S~{\wedge}\\
                                                            & ~~P \mbox{ is an }uv\mbox{-path} ~{\wedge} \\
                                                            & ~~z \mbox{ is in } P~{\wedge} \\
                                                            & ~~Card(P) \geq a~{\wedge}\\
                                                            & ~~Card(P) \leq b~{\wedge} \\
                                                            & ~~\forall P'\mbox{ }( \ (~P'\subseteq P~\wedge\\
                                                            & \mbox{~~~~~~~~~~}Card(P')\geq 2~\wedge \\
                                                            & \mbox{~~~~~~~~~~}\exists~ u',v'
                                                             (P' \mbox{ is an $u'v'$-path}~\wedge~adj(u',v')))\\
                                                             & \mbox{~~~~~~~~}~\Rightarrow
                                                            (Card(P')\geq c \ \ {\wedge} \ \  Card(P')\leq d) \ )\\
                                                            & ))
\end{split}
\end{equation}

In the above formula, paths are regarded as subsets of edges. Using this approach, the  subformula``$P$ is an $uv$-path'' can be expressed in CMSOL$_2$ (see~\cite{C97}). Note that a chord is expressed as an $u'v'$-subpath $P'$ of $P$ with length at least $c$ and at most $d$ such that $u'$ is adjacent to $v'$.
\end{proof}

\begin{corollary}~\label{cor:mcs_tw}
{\sc $(a,b,c,d)$-Convex Set} can be solved in linear time on bounded treewidth graphs.
\end{corollary}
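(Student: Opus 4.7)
The plan is to prove this corollary directly from Courcelle's meta-theorem rather than via Proposition~\ref{prop:idpo}. The latter route would only yield polynomial (but not linear) time on bounded treewidth graphs, because establishing convexity through Proposition~\ref{prop:idpo} compares $I(S)$ against $S$, and while each interval evaluation is linear via Theorem~\ref{the:mid_tw}, turning this into a CMSOL$_2$-free argument would not immediately give the linear bound we want; worse, using the iteration $I,I^2,\ldots$ adds an $O(n)$ factor.

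First I would recall that $S$ is convex if and only if $I(S)=S$, i.e., no vertex $z\notin S$ lies on an admissible $uv$-path with $u,v\in S$. Consequently, letting $\varphi(G,S,z,a,b,c,d)$ denote the CMSOL$_2$ formula from the proof of Theorem~\ref{the:mid_tw} that encodes ``$z\in I(S)$'', convexity of $S$ is captured by
$$\psi(G,S,a,b,c,d)\;\equiv\;\forall z\,\bigl(\varphi(G,S,z,a,b,c,d)\;\Rightarrow\; z\in S\bigr).$$
The formula $\psi$ is obtained from $\varphi$ using only a universal vertex quantifier, a membership predicate, and an implication, all of which are available in CMSOL$_2$; hence $\psi$ is itself CMSOL$_2$-expressible. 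Applying Courcelle's meta-theorem~\cite{C90,C93,C97} then yields the desired linear-time algorithm for {\sc $(a,b,c,d)$-Convex Set} on graphs of bounded treewidth.

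The only point worth flagging is precisely the reason for not invoking Proposition~\ref{prop:idpo} directly: the bookkeeping of that argument threatens the linear-time bound. With the direct CMSOL$_2$ reformulation above, re-using the subformula $\varphi$ already constructed in Theorem~\ref{the:mid_tw} makes the argument essentially immediate, and I do not expect any serious obstacle beyond observing that the outer $\forall z$ and the implication sit within the allowed fragment.
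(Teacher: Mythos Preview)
Your proposal is correct and follows essentially the same approach as the paper: both express ``$S$ is convex'' as ``there is no $z$ with $z\notin S$ and $z\in I(S)$'' (equivalently, your $\forall z\,(\varphi\Rightarrow z\in S)$), reuse the CMSOL$_2$ formula $\varphi$ from Theorem~\ref{the:mid_tw}, and invoke Courcelle's theorem. The paper's proof is just a terser version of what you wrote, without the side discussion of why Proposition~\ref{prop:idpo} alone would not suffice for linear time.
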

\begin{proof}
The property ``$S$ is convex'' is equivalent to ``there is no $z$ such that $z\not\in S$ and $z\in I(S)$''. By Theorem~\ref{the:mid_tw}, ``$z\in I(S)$'' is CMSOL$_2$-expressible. Thus the result easily follows.
\end{proof}

\begin{corollary}
{\sc $(a,b,c,d)$-Convex Hull Determination} can be solved in linear time on bounded treewidth graphs.
\end{corollary}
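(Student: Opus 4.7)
My plan is to show that the property ``$z \in H(S)$'' is CMSOL$_2$-expressible; once this is established, Courcelle's meta-theorem (exactly as invoked in the proof of Theorem~\ref{the:mid_tw}) immediately yields a linear-time algorithm on graphs of bounded treewidth. Note that one cannot just iterate the interval function as in Proposition~\ref{prop:idpo}, since this only gives polynomial time, whereas we want linear time; so the route via monadic second-order logic is essential.

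The key observation is that $H(S)$ can be described as the intersection of all convex sets containing $S$, which gives the equivalence
$$z \in H(S) \iff \forall T\ \Big( \big(S \subseteq T \wedge T \text{ is convex}\big) \Rightarrow z \in T \Big).$$
Since CMSOL$_2$ permits quantification over vertex (and edge) sets, the work reduces to expressing ``$T$ is convex'' in CMSOL$_2$. For this, I would use the equivalence ``$T$ is convex'' $\iff$ ``$\forall w\,(w \notin T \Rightarrow w \notin I(T))$'', noting that the subformula ``$w \in I(T)$'' is exactly the CMSOL$_2$ formula $\varphi(G,T,w,a,b,c,d)$ constructed in the proof of Theorem~\ref{the:mid_tw} (with $T$ playing the role of the seed set). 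Substituting and negating yields a CMSOL$_2$ formula $\mathsf{Conv}(T)$ that is true iff $T$ is a convex set.

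Combining these pieces, I would define
$$\psi(G,S,z,a,b,c,d) \;\equiv\; \forall T\ \Big( \big(S \subseteq T \wedge \mathsf{Conv}(T)\big) \Rightarrow z \in T \Big),$$
which is a valid CMSOL$_2$ formula (CMSOL$_2$ is closed under Boolean connectives and first- and second-order quantification, so nesting $\varphi$ inside the universal quantifier over $T$ is unproblematic). By construction, $\psi$ holds iff $z \in H(S)$. Applying Courcelle's theorem to $\psi$ gives the desired linear-time algorithm on bounded treewidth graphs, completing the proof.

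I do not expect any serious obstacle: all of the combinatorial content is already packaged inside Theorem~\ref{the:mid_tw}, and the remaining step is the purely logical manipulation above. The only point that needs minor care is ensuring that the length/chord conditions involving the constants $a,b,c,d$ are reused correctly inside the quantified formula (they appear as fixed integer parameters in $\varphi$, not as variables, so they pose no problem for CMSOL$_2$-expressibility).
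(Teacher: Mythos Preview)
Your proposal is correct and follows the same high-level strategy as the paper: express ``$z\in H(S)$'' in CMSOL$_2$ by leaning on the CMSOL$_2$-expressibility of ``$T$ is convex'' (already established via Theorem~\ref{the:mid_tw} / Corollary~\ref{cor:mcs_tw}), and then invoke Courcelle's theorem.

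The only difference is in the logical characterization of the hull. The paper writes $z\in H(S)$ as: there exists a convex set $S_1\supseteq S$ with $z\in S_1$ that is \emph{minimal} (no convex $S_2$ with $S\subseteq S_2\subsetneq S_1$). You instead use the dual characterization $H(S)=\bigcap\{T:T\text{ convex},\ S\subseteq T\}$, giving the single universally quantified formula $\forall T\,((S\subseteq T\wedge\mathsf{Conv}(T))\Rightarrow z\in T)$. Your version is a bit cleaner---one set quantifier instead of two, and no minimality clause---while the paper's version has the mild advantage of explicitly naming the hull $S_1$. Both are equally valid CMSOL$_2$ encodings and lead to the same conclusion.
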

\begin{proof}
The property ``$z\in H(S)$'' is equivalent to ``there exists $S_1$ such that: (a) $S_1$ is convex, (b) $S\subseteq S_1$, (c) $z\in S_1$, and (d) there is no $S_2$ such that $S_2$ is convex, $S\subseteq S_2$, and $S_2$ is properly contained in $S_1$''. By Corollary~\ref{cor:mcs_tw}, we can use CMSOL$_2$ to say that the sets $S_1$ and $S_2$ are convex. Thus the result follows.
\end{proof}

For the remaining three problems ({\sc $(a,b,c,d)$-Convexity/Interval/Hull Number}), we consider their optimization versions (maximization in the case of {\sc Convexity Number}, and minimization in the case of {\sc Interval/Hull Number}). Note that the properties ``$S$ is a convex set distinct from $V(G)$'', ``$S$ is an interval set'', and ``$S$ is a hull set'' can be expressed in CMSOL$_2$. Therefore the optimization versions (``find an optimal set satisfying the required property'') are LinCMSOL$_2$ problems~\cite{Ar91,C92}, to which the following result applies:

\begin{theorem} {\em \cite{Ar91,C92}}
Let $k$ be a positive constant, and $\Pi$ be a {\em LinCMSOL}$_2$ problem. Then $\Pi$ can be solved in linear time on graphs of treewidth bounded by $k$ (if the tree-decomposition is given with the input graph).
\end{theorem}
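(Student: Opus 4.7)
The plan is to follow the standard dynamic-programming approach underlying Courcelle-type meta-theorems, extended to handle the linear extremum objective required for a LinCMSOL$_2$ problem.

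First, I would preprocess the given tree-decomposition of width at most $k$ into a \emph{nice} tree-decomposition (with leaf, introduce-vertex, introduce-edge, forget, and join nodes); this conversion can be done in linear time and blows up the number of nodes only by a factor depending on $k$. This canonical shape is what allows the DP to be described by a constant number of transition rules.

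Second, for the CMSOL$_2$ formula $\varphi$ defining $\Pi$ and the accompanying linear objective, I would set up a finite collection of \emph{types} classifying partial solutions on the subgraph processed below a node $t$ of the tree-decomposition. Two partial solutions are declared equivalent when they agree on their interaction with the bag $V_t$ and are indistinguishable by any extension of $\varphi$; a Feferman--Vaught-style decomposition for MSO over tree structures guarantees that the number of such types is bounded by a function of $\varphi$ and $k$ alone, independently of $n$.

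Third, I would run a bottom-up dynamic program on the nice tree-decomposition. At each node the algorithm maintains a table indexed by types, whose entries store the optimum (maximum for {\sc Convexity Number}, minimum for {\sc Interval/Hull Number}) value of the linear objective achievable by a partial solution of that type. Because the type set is finite and each of the constantly many node-operations updates each table entry in time bounded by a function of $\varphi$ and $k$ only, the whole DP runs in $O(n)$ time. The answer is read off from the root entry corresponding to the trivial type.

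The main obstacle is establishing the finiteness and effective computability of the type space together with the fact that sums and extrema of CMSOL$_2$-definable quantities are preserved by the join/introduce/forget transitions. For pure decision properties this is Courcelle's theorem; lifting it to linear extrema is precisely the contribution of~\cite{Ar91,C92} that the theorem invokes, via the machinery of recognizable tree languages applied to the term algebra encoding the nice tree-decomposition. Once that machinery is in place, the proof reduces to a routine verification that each transition rule preserves the DP invariant.
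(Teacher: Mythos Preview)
The paper does not prove this theorem at all: it is quoted verbatim as a known meta-theorem from \cite{Ar91,C92}, with no accompanying argument. There is therefore nothing in the paper to compare your proposal against.

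Your sketch is a reasonable high-level outline of the standard machinery behind the cited results (nice tree-decomposition, finite type space via a Feferman--Vaught decomposition, bottom-up DP carrying optimal objective values per type). As a summary of the proof strategy in \cite{Ar91,C92} it is accurate in spirit, though of course the actual work lies in the parts you flag as ``the main obstacle'' and defer to those references. For the purposes of this paper, however, no proof is expected: the theorem is invoked as a black box, and the paper's own contribution is only to verify that the relevant properties (``$S$ is convex and $S\neq V(G)$'', ``$S$ is an interval set'', ``$S$ is a hull set'') are CMSOL$_2$-expressible so that the cited meta-theorem applies.
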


Therefore:

\begin{corollary}
The optimization versions of {\sc $(a,b,c,d)$-Convexity Number}, {\sc $(a,b,c,d)$-Interval Number}, and {\sc $(a,b,c,d)$-Hull Number} can be solved in linear time on bounded treewidth graphs (if the tree-decomposition is given with the input graph).
\end{corollary}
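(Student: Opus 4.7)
The plan is to reduce each of the three problems to an instance of the LinCMSOL$_2$ framework and invoke the meta-theorem quoted just above the corollary. Since that meta-theorem requires (i) a CMSOL$_2$-expressible property of the candidate set and (ii) a linear objective in the set's cardinality, my task splits into writing the three relevant CMSOL$_2$ formulas and then stating the optimization target.

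First, I would reuse the machinery from Theorem~\ref{the:mid_tw} and Corollary~\ref{cor:mcs_tw}. From those results the predicates $\mathrm{InInterval}(S,z) \equiv (z\in I(S))$ and $\mathrm{Convex}(S) \equiv \neg\exists z\,(z\notin S \wedge \mathrm{InInterval}(S,z))$ are both CMSOL$_2$-expressible. Building on these, I would write:
\begin{itemize}
\item for {\sc Convexity Number}: $\Phi_{cn}(S) \equiv \mathrm{Convex}(S) \wedge \exists v\,(v\notin S)$, and the goal is to maximize $|S|$;
\item for {\sc Interval Number}: $\Phi_{in}(S) \equiv \forall z\,\mathrm{InInterval}(S,z)$, and the goal is to minimize $|S|$;
\item for {\sc Hull Number}: $\Phi_{hn}(S) \equiv \forall z\,\mathrm{InHull}(S,z)$, where $\mathrm{InHull}(S,z)$ is the CMSOL$_2$ predicate developed in the proof of the {\sc $(a,b,c,d)$-Convex Hull Determination} corollary, and the goal is to minimize $|S|$.
\end{itemize}
Each $\Phi$ above is a Boolean combination of predicates already shown to be CMSOL$_2$-expressible, together with standard quantification over vertices, so each $\Phi$ is itself CMSOL$_2$-expressible.

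Next, I would note that in all three cases the objective is simply $|S|$, which is a linear function of the indicator variables of the vertex set $S$; this is exactly the shape of a LinCMSOL$_2$ optimization problem as defined in~\cite{Ar91,C92}. Invoking the meta-theorem stated just above the corollary then yields a linear-time algorithm on graphs of treewidth bounded by $k$, provided the tree-decomposition is supplied with the input.

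The main obstacle I anticipate is purely a bookkeeping one: making sure that the CMSOL$_2$ predicate for $\mathrm{InHull}(S,z)$ really is expressible without self-reference, since the convex hull is defined iteratively. The way to handle this cleanly is to follow the same trick used in the Convex Hull Determination corollary, namely to existentially quantify over a set $S_1$ asserted to be convex, to contain $S\cup\{z\}$, and to be minimal among such sets (the minimality clause itself uses a further quantification over candidate $S_2$ with $\mathrm{Convex}(S_2)$). Once this formula is in place, the remaining steps are immediate from the cited meta-theorem.
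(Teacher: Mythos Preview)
Your proposal is correct and follows essentially the same approach as the paper: express the three properties (``$S$ is a proper convex set'', ``$S$ is an interval set'', ``$S$ is a hull set'') in CMSOL$_2$ using the predicates already established in Theorem~\ref{the:mid_tw} and the subsequent corollaries, note that the objective is $|S|$, and invoke the LinCMSOL$_2$ meta-theorem. The only difference is that you spell out the formulas explicitly and route the hull-set predicate through $\mathrm{InHull}$, whereas the paper simply asserts expressibility; both are valid, and indeed a slightly shorter alternative for the hull case is to say directly that $S$ is a hull set iff no convex $C$ satisfies $S\subseteq C\subsetneq V(G)$, avoiding the minimality clause altogether.
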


\subsection{Particular cases of the $(a,b,c,d)$-path convexity}\label{sec:partcases}

In this section we show that, by extending the meaning of the parameters $a,b,c,d$, some path convexities in the literature can be viewed as particular cases of the $(a,b,c,d)$-path convexity. In Table~\ref{abcd-path-convexities} below, the symbol `$\sigma$' (resp.,`$\ell$') means that the length of the shortest (resp., longest) path between each pair of distinct vertices must be considered. The symbol `$\infty$' stands for no length restriction. For a constant $k$, the symbol `$k\mid\sigma$' means that, for each pair $(i,j)$ of distinct vertices, the minimum value between $k$ and the length of the shortest $ij$-path must be considered.

\begin{table}[H]
\centering
\small
\begin{tabular}{ | l | C{2.0cm} | C{2.7cm} | C{2.0cm} | C{2.0cm} | }
\toprule
{\bf Convexity}&\textit{\textbf{a}}&\textit{\textbf{b}}&\textit{\textbf{c}}&\textit{\textbf{d}}\\ \hline
geodesic        & $\sigma$           & $\sigma$                 & $1$             & $1$      \\
monophonic      & $2$                & $\infty$                 & $1$             & $1$      \\
$g^3$           & $3$                & $\sigma$                 & $1$             & $1$      \\
$g_k$           & $\sigma$           & $k\mid\sigma$            & $1$             & $1$      \\
$m^3$           & $3$                & $\infty$        			& $1$             & $1$      \\
$P_3$           & $2$                & $2$            			& $1$             & $2$      \\
$P_3^*$         & $2$                & $2$            			& $1$             & $1$      \\
triangle-path   & $2$                & $\infty$        			& $1$             & $2$      \\
total           & $2$                & $\infty$        			& $3$             & $\infty$ \\
detour          & $\ell$             & $\ell$   	            & $1$             & $\infty$ \\
all-path        & $2$                & $\infty$        			& $1$             & $\infty$ \\
\bottomrule
\end{tabular}
\caption{Path convexities as particular cases of the $(a,b,c,d)$-path convexity. Note that putting $c=d=1$ implies that all the paths of the considered collection $\P$ are chordless.\label{abcd-path-convexities}}
\end{table}

\section{Concluding remarks}

In this work we described a matrix path convexity framework, where, by means of four input matrices, we can specify the types of paths that must be considered for each pair of vertices of the input graph, ``customizing'' the path convexity to be dealt with. Since this general approach results in the hardness of the related computational problems (at least the more studied ones), we also investigate the case of constant matrices. The latter case leads to the study of the $(a,b,c,d)$-path convexity, where the rule that defines the convexity is not part of the input. If $a$, $b$, $c$, and $d$ are positive constants then, in such convexity, the problems of computing the interval of a set, deciding whether a set is convex, and computing the convex hull of a set, are all solvable in polynomial time. In addition, all the ``$(a,b,c,d)$-versions'' of the six problems listed in Section~\ref{problems} are solvable in linear time if the input graph has bounded treewidth. We have also shown that, by extending the meaning of the parameters $a,b,c,d$, most path convexities considered in the literature can be viewed as particular cases of the $(a,b,c,d)$-path convexity. In this regard, other interesting convexities, not yet considered in the literature up to the authors' knowledge, can be defined by choosing other values for the tuple $(a,b,c,d)$.  Tables~\ref{convexitiesproposed} and~\ref{abcd-path-convexities-proposed} describe such convexities. The symbol `n$^-$\;' means that paths with length one less than the number of vertices of the input graph must be considered.

\begin{table}[htb]
\centering
\small
\begin{tabular}{ |l|c|c|c|c|c| }
\toprule
convexity name & collection of paths $\P$ considered\\ \hline
$g^k$          & shortest paths of length at least $k$ \\
$m^k$          & induced paths of length at least $k$ \\
$(k,l)$-path   & chordless paths of length between $k$ and $l$ \\
$k$-path       & chordless paths with $k$ vertices\\
Hamiltonian    & Hamiltonian paths \\
\bottomrule
\end{tabular}
\caption{Some new convexities proposed in this work.}\label{convexitiesproposed}
\end{table}

\begin{table}[htb]
\centering
\small
\begin{tabular}{ | l | C{1.5cm} | C{1.5cm} | C{1.5cm} | C{1.5cm} | }
\toprule
{\bf Convexity} &\textit{\textbf{a}} &\textit{\textbf{b}} &\textit{\textbf{c}} &\textit{\textbf{d}}\\ \hline
$g^k$               & $k$     & $\sigma$     & $1$     & $1$      \\
$m^k$               & $k$     & $\infty$     & $1$     & $1$      \\
$(k,l)$-path        & $k$     & $l$          & $1$     & $1$      \\
$k$-path            & $k-1$   & $k-1$        & $1$     & $1$    \\
Hamiltonian         & n$^-$   & n$^-$        & $1$     & $\infty$  \\
\bottomrule
\end{tabular}
\caption{Convexities from Table~\ref{convexitiesproposed} described according the $(a,b,c,d)$-path convexity framework.\label{abcd-path-convexities-proposed}}
\end{table}

Let $\mathds{K}=\{k \mid \sigma\}_{k\in\mathds{N}^*}$ and $\Sigma=\mathds{N}^*\cup\{\sigma,\infty,\ell,\mathrm{n}^-\}\cup\mathds{K}$, and consider the domain of tuples $(a,b,c,d)\in\Sigma^4$ (considering, of course, only meaningful cases). Such domain can be used to systematize algorithmic studies in path convexity in some ways. One example is to find complexity dichotomies (complete classifications of the complexity of a fixed computational problem $\Pi$). For example, if $\Pi=$\,{\sc Interval Determination}, $c=1$, and $d\in\{1,2\}$, which values of $(a,b,c,d)$ imply tractability (hardness) of $\Pi$ under the $(a,b,c,d)$-path convexity? From Tables~\ref{conv-vs-prob} and~\ref{abcd-path-convexities} we know that the cases $(\sigma,\sigma,1,1)$, $(2,2,1,1)$, and $(2,2,1,2)$ are in $P$, while $(2,\infty,1,1)$ and $(2,\infty,1,2)$ are NP-complete.


{\bf\large References}

\end{document}